\author{{Zhiqiang Wei, Lei Yang, Derrick Wing Kwan Ng, and Jinhong Yuan}
\thanks{Zhiqiang Wei, Lei Yang, Derrick Wing Kwan Ng, and Jinhong Yuan are with the School of Electrical Engineering and Telecommunications, the University of New South Wales, Australia (email: zhiqiang.wei@student.unsw.edu.au; lei.yang3@unsw.edu.au; w.k.ng@unsw.edu.au; j.yuan@unsw.edu.au). This work was supported in part by the Australia Research Council (ARC) Discovery Project DP160104566 and Linkage Project LP160100708.}}
\title{On the Performance Gain of NOMA over OMA in Uplink Single-cell Systems}
\newtheorem{Thm}{Theorem}
\newtheorem{proof}{proof}
\newtheorem{T-Prob}{Transformed Problem}
\newtheorem{Remark}{Remark}
\newcommand{\abs}[1]{\lvert#1\rvert}
\begin{document}
\maketitle
\begin{abstract}
In this paper, we investigate the performance gain of non-orthogonal multiple access (NOMA) over orthogonal multiple access (OMA) in uplink single-cell systems.
In both single-antenna and multi-antenna scenarios, the performance gain of NOMA over OMA in terms of asymptotic ergodic sum-rate is analyzed for a sufficiently large number of users.
In particular, in single-antenna systems, we identify two types of near-far gains brought by NOMA:
1) the large-scale near-far gain via exploiting the large-scale fading increases with the cell size; 2) the small-scale near-far gain via exploiting the small-scale fading is a constant given by $\gamma = 0.57721$ nat/s/Hz in Rayleigh fading channels.
Furthermore, we have analyzed that the performance gain achieved by single-antenna NOMA can be amplified via increasing the number of antennas equipped at the base station due to the extra spatial degrees of freedom.
The numerical results confirm the accuracy of the derived analyses and unveil the performance gains of NOMA over OMA in different scenarios.
\end{abstract}

\section{Introduction}
Recently, non-orthogonal multiple access (NOMA) has drawn significant attention in both industry and academia as a promising multiple access technique for the fifth-generation (5G) wireless newtworks\cite{Dai2015,Ding2015b,WeiSurvey2016,wong2017key}.
The principal of NOMA is to exploit the power domain for multiuser multiplexing together with superposition coding and to alleviate inter-user interference (IUI) via successive interference cancellation (SIC) techniques\cite{WeiSurvey2016}.
Towards the evolution of 5G, the industrial community has proposed various forms of NOMA as potential multiple access technologies\cite{ChenNOMAScheme}.

In contrast to conventional orthogonal multiple access (OMA) schemes \cite{DerrickEEOFDMA}, NOMA can serve multiple users via the same degree of freedom (DOF) and achieve a higher spectral efficiency\cite{Sun2016Fullduplex,WeiTCOM2017,Qiu2018Lattice}.
In particular, in single-antenna systems, it has been proved that single-input single-output NOMA (SISO-NOMA) can provide a higher ergodic sum-rate than that of SISO-OMA with a fixed resource allocation scheme\cite{Ding2014}.
Furthermore, the superior performance of SISO-NOMA over SISO-OMA with an optimal resource allocation design in single-antenna systems was shown in \cite{Chen2017}.
In \cite{WangPowerEfficiency,Xu2017}, the authors firstly revealed that the sources of performance gain of NOMA over OMA are two-fold: near-far diversity gain and angle diversity gain when there are multiple antennas.
However, analytical results for quantifying the ergodic sum-rate gain (ESG) of NOMA over OMA has not been reported yet.
More importantly, the potential gain of NOMA over OMA has not been well understood to provide some system design insights.

To achieve a higher spectral efficiency, the concept of NOMA has been extended to multi-antenna systems, namely multiple-input multiple-output NOMA (MIMO-NOMA), based on the signal alignment technique \cite{DingSignalAlignment} and the concept of quasi-degradation \cite{ChenQuasiDegradation}.
In particular, it has been shown that there is a performance gain of MIMO-NOMA over multiple-input multiple-output OMA (MIMO-OMA)\cite{DingSignalAlignment,ChenQuasiDegradation}.
However, when we extend NOMA from single-antenna systems to multi-antenna systems, it is still unclear, if the same ESG can be achieved.
Moreover, how the ESG of NOMA over OMA changes from single-antenna to multi-antenna systems is still an open and interesting problem and deserves our efforts to explore.
The answers to these questions above are the key to unlock the potential in applying NOMA to the future 5G communications systems.

This paper aims to deepen the understanding on the ESG of NOMA over OMA in uplink single-cell systems.
In particular, we quantify the ESG of NOMA over OMA in both single-antenna and multi-antenna systems.
Furthermore, performance analyses are provided to study the improvement of ESG when extending NOMA from single-antenna systems to multi-antenna systems.
Simulation results confirm the accuracy of our derived performance analyses and demonstrate some interesting insights which are summarized in the following:
\begin{itemize}
  \item In both single-antenna and multi-antenna scenarios, the ESGs of NOMA over OMA saturate for large numbers of users and sufficiently high signal-to-noise ratio (SNR).
  \item In single-antenna scenario, we identify two types of \emph{near-far gains}\cite{Xu2017} brought by NOMA and reveal their different behaviors.
      In particular, the \emph{large-scale near-far gain} via exploiting the large-scale fading increases with the cell size, while the \emph{small-scale near-far gain} arising from the small-scale fading is a constant of $\gamma = 0.57721$ nat/s/Hz in Rayleigh fading channels.
  \item As applying NOMA in multi-antenna systems, the ESG of SISO-NOMA over SISO-OMA can be amplified $M$-fold when the base station is equipped with $M$ antennas owing to the \emph{spatial degrees of freedom}\cite{Xu2017}.
\end{itemize}

Notations used in this paper are as follows. Boldface capital and lower case letters are reserved for matrices and vectors, respectively. ${\left( \cdot \right)^{\mathrm{T}}}$ denotes the transpose of a vector or matrix and ${\left( \cdot \right)^{\mathrm{H}}}$ denotes the Hermitian transpose of a vector or matrix.
$\mathbb{C}^{M\times N}$ denotes the set of all $M\times N$ matrices with complex entries;
$\abs{\cdot}$ denotes the absolute value of a complex scalar or the determinant of a matrix, and $\|{\cdot}\|$ denotes Euclidean norm of a complex vector.
The circularly symmetric complex Gaussian distribution with mean $\mu$ and variance $\sigma^2$ is denoted by ${\cal CN}(\mu,\sigma^2)$.

\section{System Model}
\begin{figure}[t]
\centering
\includegraphics[width=2in]{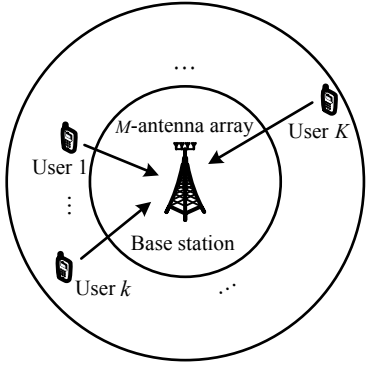}\vspace{-3mm}
\caption{The system model of uplink communication with one BS and $K$ users.}\vspace{-5mm}
\label{NOMA_Uplink_Model}
\end{figure}
\subsection{System Model}
We consider an uplink\footnote{We restrict ourselves to the uplink NOMA communications, as the reception of NOMA is more affordable for the base station.} single-cell NOMA system with a base station (BS) and $K$ users, as shown in Fig. \ref{NOMA_Uplink_Model}.
The cell is modeled as two concentric ring-shaped discs.
The BS is located at the center of the ring-shaped discs with the inner radius of $D_0$ and outer radius of $D$, wherein all the $K$ users are scattered uniformly.
For the NOMA scheme, all the $K$ users are multiplexed on the same frequency band and time slot, while for the OMA scheme, $K$ users utilize the frequency or time resources orthogonally.
Without loss of generality, we consider a frequency division multiple access (FDMA) as a typical OMA scheme.

As shown in Fig. \ref{NOMA_Uplink_Model}, we consider two kinds of typical communication systems:
\begin{itemize}
  \item SISO-NOMA and SISO-OMA: the BS is equipped with a single-antenna ($M=1$) and all the $K$ users are single-antenna devices.
  \item MIMO-NOMA and MIMO-OMA: the BS is equipped with a multi-antenna array ($M>1$) and all the $K$ users are single-antenna devices with $K > M$.
\end{itemize}
Note that, in the second case, we still denote them as MIMO-NOMA and MIMO-OMA even each user is equipped with a single-antenna, because there are multiple users uploading their independent data streams to the BS.

The received signal at the BS is given by
\vspace{-1.5mm}
\begin{equation}\label{MIMONOMASystemModel}
{\bf{y}} = \sum\nolimits_{k = 1}^K {{{\bf{h}}_k}} \sqrt {{p_k}} {x_k} + {\bf{v}},
\vspace{-1.5mm}
\end{equation}
where ${\bf{y}}\in \mathbb{C}^{ M \times 1}$, $p_k$ denotes the power allocation for user $k$, $x_k$ denotes the normalized modulated symbol for user $k$ with ${\mathrm{E}}\left\{ \left|{x_k}\right|^2 \right\} = 1$, and ${\bf{v}}\sim \mathcal{CN}\left(\mathbf{0},N_0 {{\bf{I}}_M}\right)$ denotes the additive white Gaussian noise (AWGN) at the BS with zero mean and covariance matrix of $N_0 {{\bf{I}}_M}$.
For the system power budget, we consider a sum-power constraint for all the users, i.e.,
\vspace{-1.5mm}
\begin{equation}\label{SumPowerConstraint}
\sum\nolimits_{k = 1}^K {{p_{k}}}  \le {P_{\text{max}}},
\vspace{-1.5mm}
\end{equation}
where ${P_{\text{max}}}$ is the maximum transmit power for all the users.
Note that, the sum-power constraint is a commonly adopted assumption in the literature to simplify the performance analysis for uplink communications, e.g. \cite{Vishwanath2003,WangMUG,Xu2017}.
In fact, the sum-power constraint is a reasonable assumption for practical cellular communication systems, where a total transmit power limitation is usually imposed to prevent a large inter-cell interference.

The channel vector between user $k$ and the BS is modeled as
\vspace{-1mm}
\begin{equation}\label{ChannelModel}
{{{\bf{h}}_k}} = \frac{{\bf{g}}_k}{\sqrt{1+d_k^{\alpha}}},
\vspace{-1mm}
\end{equation}
where ${\bf{g}}_k \in \mathbb{C}^{ M \times 1}$ denotes the Rayleigh fading coefficients, i.e., ${\bf{g}}_k \sim \mathcal{CN}\left(\mathbf{0},{{\bf{I}}_M}\right)$, $d_k$ denotes distance between user $k$ and the BS with the unit of meter, and $\alpha$ denotes the path loss exponent\footnote{In this paper, we ignore the log-normal shadowing to simplify our performance analysis.
Note that, the log-normal shadowing only introduces an additional scalar multiplication to the channel model in \eqref{ChannelModel}.
Although the log-normal shadowing may change the resulting channel distribution of ${{{\bf{h}}_k}}$, the distance-based channel model is sufficient to characterize the near-far gain \cite{Xu2017} discussed in this paper.}.
We note ${\bf{H}} = \left[ {{{\bf{h}}_1}, \ldots ,{{\bf{h}}_K}} \right] \in \mathbb{C}^{ M \times K}$ as the channel matrix between all the $K$ users and the BS.
When $M=1$, ${{{{h}}_k}} = \frac{{{g}}_k}{\sqrt{1+d_k^{\alpha}}}$ denotes the corresponding channel coefficient of user $k$ in single-antenna systems.
We assume that the channel coefficients are independent and identically distributed (i.i.d.) over all the users and antennas.
Since this paper aims to provide some theoretical insights about the performance gain of NOMA over OMA, we assume that perfect channel state information (CSI) is known at the BS.
Without loss of generality, we assume $\left\| {{{\bf{h}}_1}} \right\| \ge, \ldots, \ge\left\| {{{\bf{h}}_K}} \right\|$, i.e., users are indexed based on their channel gains.
\subsection{Signal Detection and Resource Allocation Strategy}
To facilitate the performance analysis, we focus on the following well-known and effective signal detection and resource allocation strategies.
\subsubsection{Signal detection}
For SISO-NOMA, we adopt the commonly used SIC \cite{wei2017performance} decoding at the BS, since it is capacity achieving for single-antenna systems\cite{Tse2005}.
On the other hand, for SISO-OMA, as all the users are separated by orthogonal frequency subbands and thus the simple single-user detection (SUD) can be used to achieve the optimal performance.
For MIMO-NOMA, the minimum mean square error and successive interference cancellation (MMSE-SIC) decoding is an appealing reception technique since it is capacity achieving \cite{Tse2005} with an acceptable computational complexity for a finite number of antennas $M$ at the BS.
On the other hand, FDMA zero forcing (FDMA-ZF) is adopted for MIMO-OMA, which is known as capacity achieving in the high SNR regime\cite{Tse2005}.
In particular, owing to the extra spatial DOF induced by multiple antennas at the BS, all the users can be divided into $G = K/M$ groups\footnote{Without loss of generality, we consider the case with $G$ as an integer in this paper.} with each group containing $M$ users.
Then, ZF is utilized to handle the inter-user interference within each group and FDMA is utilized to separate all the $G$ groups on orthogonal frequency subbands.

\subsubsection{Resource allocation strategy}
We consider an equal resource allocation (ERA) strategy for both NOMA and OMA scheme.
In particular, equal power allocation is adopted for NOMA schemes, i.e., ${{p_{k}}} = \frac{{P_{\text{max}}}}{K}$.
On the other hand, equal power and frequency allocation is adopted for OMA schemes, i.e., ${{p_{k}}} = \frac{{P_{\text{max}}}}{K}$ and $f_k = 1/K$ or $f_g = 1/G$, where $f_k$ and $f_g$ denotes the normalized frequency allocation for user $k$ in SISO-OMA and for group $g$ in MIMO-OMA, respectively.
Note that the ERA is a typical but effective strategy utilized in some application scenarios requiring a limited system overhead, e.g. machine-type communications (MTC).

We note that the user grouping design is involved in the considered MIMO-OMA system with FDMA-ZF.
Fortunately, for any user grouping strategy, the ergodic sum-rate of MIMO-OMA remains the same, since all the users have i.i.d. channel distributions and can access the identical resource.
Therefore, a random user grouping strategy is adopted for the MIMO-OMA system with FDMA-ZF.
In particular, we randomly select $M$ users for each group on each frequency subband and denote the corresponding composite channel matrix of $g$-th group as ${\bf{H}}_g = \left[ {{{\bf{h}}_{(g-1)M+1}}, \ldots ,{{\bf{h}}_{gM}}} \right] \in \mathbb{C}^{ M \times M}$.
On the other hand, we consider a fixed SIC and MMSE-SIC decoding order at the BS as $1,\ldots,K$ for SISO-NOMA and MIMO-NOMA, respectively\footnote{Note that the SIC and MMSE-SIC decoding order at the BS do not affect the system sum-rate in uplink SISO-NOMA systems\cite{WeiLetter2018} and uplink MIMO-NOMA systems\cite{Tse2005}, respectively.}.
In addition, to provide some theoretical insights about the performance gain of NOMA over OMA, we assume that there is no error propagation during SIC and MMSE-SIC decoding at the BS.
\section{ESG of SISO-NOMA over SISO-OMA}
In this section, we first derive the ergodic sum-rate of SISO-NOMA and SISO-OMA.
Then, the ESG of SISO-NOMA over SISO-OMA is discussed asymptotically.

\subsection{Ergodic Sum-rate}
The instantaneous achievable data rate of user $k$ of the considered SISO-NOMA system is given by:
\vspace{-1mm}
\begin{equation}\label{SISONOMAIndividualAchievableRate}
R_{k}^{\text{SISO-NOMA}} = {\ln}\left( 1 + \frac{{{p_k}{{\left| {{h_k}} \right|}^2}}}{{\sum\nolimits_{i = k + 1}^K {{p_i}{{\left| {{h_i}} \right|}^2}}  + {N_0}}} \right),
\vspace{-1mm}
\end{equation}
while the instantaneous achievable data rate of user $k$ in the considered SISO-OMA system is given by:
\vspace{-1mm}
\begin{equation}\label{SISOOMAIndividualAchievableRate}
R_{k}^{{\rm{SISO-OMA}}} = f_k {\ln}\left(1+ {\frac{{{p_{k }}{{\left| {{{{h}}_k}} \right|}^2}}}{{f_k N_0}}} \right).
\vspace{-1mm}
\end{equation}
Under the ERA strategy defined above, we have the instantaneous sum-rate of SISO-NOMA and SISO-OMA given by
\begin{align}
R_{\rm{sum}}^{{\rm{SISO-NOMA}}} &= \sum\nolimits_{k = 1}^K R_{k}^{\text{SISO-NOMA}}\notag\\
&=  {\ln}\left( 1 + \frac{{P_{\text{max}}}}{K {N_0}} \sum\nolimits_{k = 1}^K {{\left| {{{h}_k}} \right|}^2} \right) \;\text{and} \label{InstantSumRateSISONOMA}\\
R_{\rm{sum}}^{{\rm{SISO-OMA}}} &= \sum\nolimits_{k = 1}^K R_{k}^{\text{SISO-OMA}}\notag\\
&= \frac{1}{K}\sum\nolimits_{k = 1}^K {\ln}\left( 1 + \frac{P_{\text{max}}}{{N_0}} {{\left| {{{h}_k}} \right|}^2} \right), \label{InstantSumRateSISOOMA}
\end{align}
respectively.
It can be observed that in the low SNR regime with ${P_{\text{max}}} \to 0$, the performance gain of SISO-NOMA over SISO-OMA vanishes.

Since all the users are scattered uniformly in two concentric rings with the inner radius of $D_0$ and outer radius of $D$, the cumulative distribution function (CDF) of the channel gain ${{\left| {{h}} \right|}^2}$ is given by\footnote{Since all the users have i.i.d. channel distributions, the subscript $k$ might be dropped without causing notational confusion in this paper.}
\vspace{-1mm}
\begin{equation}\label{ChannelDistributionCDF}
{F_{{{\left| {{h}} \right|}^2}}}\left( x \right) = \int_{D_0}^D {\left( {1 - {e^{ - \left( {1 + {z^\alpha }} \right)x}}} \right)} {f_{{d}}}\left( z \right)dz,
\vspace{-1mm}
\end{equation}
where ${f_{{d}}}\left( z \right) = \frac{2z}{D^2 - D_0^2}$, $D_0 \le z \le D$, denotes the probability density function (PDF) for the random distance $d$.
With the Gaussian-Chebyshev quadrature approximation\cite{abramowitz1964handbook}, the CDF and PDF of ${{\left| {{h}} \right|}^2}$ are given by
\begin{align}
{F_{{{\left| {{h}} \right|}^2}}}\left( x \right) &\approx 1 - \frac{1}{D+D_0}\sum\nolimits_{n = 1}^N {{\beta _n}{e^{ - {c_n}x}}} \; \text{and} \label{SISOChannelDistributionCDF}\\
{f_{{{\left| {{h}} \right|}^2}}}\left( x \right) &\approx \frac{1}{D+D_0}\sum\nolimits_{n = 1}^N {{\beta _n}{c_n}{e^{ - {c_n}x}}}, x \ge 0, \label{SISOChannelDistributionPDF}
\end{align}
respectively, where
\begin{align}\label{BetaCn}
{\beta_n} &= \frac{\pi }{N}\left| {\sin \frac{{2n \hspace{-1mm}-\hspace{-1mm} 1}}{{2N}}\pi } \right|\left( {\frac{D\hspace{-1mm}-\hspace{-1mm}D_0}{2}\cos \frac{{2n \hspace{-1mm}-\hspace{-1mm} 1}}{{2N}}\pi  + \frac{D\hspace{-1mm}+\hspace{-1mm}D_0}{2}} \right) \;\text{and}\notag\\
{c_n} &= 1 + {\left( {\frac{D\hspace{-1mm}-\hspace{-1mm}D_0}{2}\cos \frac{{2n \hspace{-1mm}-\hspace{-1mm} 1}}{{2N}}\pi  + \frac{D\hspace{-1mm}+\hspace{-1mm}D_0}{2}} \right)^\alpha }
\end{align}
are approximation parameters and $N$ denotes the number of integral approximation terms.

Based on \eqref{InstantSumRateSISONOMA}, we have the asymptotic ergodic sum-rate of the considered SISO-NOMA system with $K\rightarrow \infty$ as follows:
\begin{align}\label{ErgodicSumRateSISONOMA}
\hspace{-2mm}\mathop {\lim }\limits_{K \to \infty }  \overline{R_{\rm{sum}}^{{\rm{SISO-NOMA}}}}
& \mathop = \limits^{(a)} {\ln}\left( 1 + \frac{P_{\text{max}}}{N_0} \overline{{{\left| {{h}} \right|}^2}} \right) \notag\\
& \approx {\ln}\left(\hspace{-1mm} {1 \hspace{-1mm}+ \hspace{-1mm} \frac{P_{\text{max}}}{{{\left(D\hspace{-1mm}+\hspace{-1mm}D_0\right)}N_0}}\sum\nolimits_{n = 1}^N \hspace{-1mm} {\frac{{{\beta _n}}}{{{c_n}}}} } \hspace{-1mm}\right),
\end{align}
where $\overline{{{\left| {{h}} \right|}^2}}$ denotes the average channel power gain and it is given by
\vspace{-1mm}
\begin{equation}\label{Mean_ChannelPowerGainSISO}
\overline{{{\left| {{h}} \right|}^2}} \approx \int_0^\infty  {x{f_{{{\left| {{h}} \right|}^2}}}\left( x \right)} dx
= \frac{1}{D+D_0}\sum\nolimits_{n = 1}^N {\frac{{{\beta _n}}}{{{c_n}}}}.
\vspace{-1mm}
\end{equation}
Note that the equality $(a)$ in \eqref{ErgodicSumRateSISONOMA} only holds asymptotically for $K \to \infty$ due to $\mathop {\lim } \limits_{K \to \infty } {\frac{1}{K}\sum\nolimits_{k = 1}^K {{{\left| {{h_k}} \right|}^2}} } = \overline{{{\left| {{h}} \right|}^2}}$.
Otherwise, for a finite number of users $K$, the asymptotic ergodic sum-rate in \eqref{ErgodicSumRateSISONOMA} serves as an upper bound for the actual ergodic sum-rate, i.e., $\mathop {\lim }\limits_{K \to \infty }  \overline{R_{\rm{sum}}^{{\rm{NOMA}}}} \ge \overline{R_{\rm{sum}}^{{\rm{NOMA}}}}$, owing to the concavity of the logarithm function and the Jensen's inequality.
In the simulations, we show that the asymptotic analysis in \eqref{ErgodicSumRateSISONOMA} is accurate even for a finite value of $K$ and becomes tight with increasing $K$.

For the considered SISO-OMA system, based on \eqref{InstantSumRateSISOOMA}, we obtain its ergodic sum-rate as follows:
\begin{align}\label{ErgodicSumRateSISOOMA}
\hspace{-2mm}\overline{R_{\rm{sum}}^{{\rm{SISO-OMA}}}}
& = \int_0^\infty  {{{\ln }}\left( {1 + \frac{P_{\text{max}}}{N_0}x} \right){f_{{{\left| {{h}} \right|}^2}}}\left( x \right)} dx \notag\\
& = \frac{1}{\left(D\hspace{-1mm}+\hspace{-1mm}D_0\right)}\sum\nolimits_{n = 1}^N {{\beta _n}{e^{\frac{{{c_n N_0}}}{{P_{\text{max}}}}}} {E_1}\left( {\frac{{{c_n N_0}}}{{P_{\text{max}}}}} \right)},
\end{align}
where ${E_l}\left( x \right) = \int_1^\infty  {\frac{{{e^{ - xt}}}}{t^l}} dt$ denotes the $l$-order exponential integral\cite{abramowitz1964handbook}.
Note that, different from $\overline{R_{\rm{sum}}^{{\rm{SISO-NOMA}}}}$ in \eqref{ErgodicSumRateSISONOMA}, $\overline{R_{\rm{sum}}^{{\rm{SISO-OMA}}}}$ in \eqref{ErgodicSumRateSISOOMA} is applicable for an arbitrary number of users $K$.

\subsection{ESG in Single-antenna Systems}
Based on \eqref{ErgodicSumRateSISONOMA} and \eqref{ErgodicSumRateSISOOMA}, the asymptotic ESG of SISO-NOMA over SISO-OMA with ${K \rightarrow \infty}$ can be given as follows:
\begin{align}\label{EPGSISOERA}
\hspace{-2mm}\mathop {\lim }\limits_{K \rightarrow \infty}  \overline{G^{\rm{SISO}}} &= \mathop {\lim }\limits_{K \to \infty }  \overline{R_{\rm{sum}}^{{\rm{SISO-NOMA}}}} - \overline{R_{\rm{sum}}^{{\rm{SISO-OMA}}}} \notag\\
&\approx {\ln}\left( {1 + \frac{{P_{\text{max}}}}{{{\left(D+D_0\right)N_0}}}\sum\nolimits_{n = 1}^N {\frac{{{\beta _n}}}{{{c_n}}}} } \right)\notag\\
&- \frac{1}{\left(D\hspace{-1mm}+\hspace{-1mm}D_0\right)}\sum\nolimits_{n = 1}^N {{\beta _n}{e^{\frac{{{c_n}N_0}}{{P_{\text{max}}}}}} {E_1}\left( {\frac{{{c_n}N_0}}{{P_{\text{max}}}}} \right)}.
\end{align}
To obtain more insights, in the high SNR regime with ${P_{\text{max}}} \rightarrow \infty$, we approximate the asymptotic ESG in \eqref{EPGSISOERA} as
\vspace{-1mm}
\begin{equation}\label{EPGSISOERA2}
\mathop {\lim }\limits_{K \rightarrow \infty, {P_{\text{max}}} \rightarrow \infty}  \overline{G^{\rm{SISO}}} \approx \vartheta \left( {D,{D_0}} \right) + \gamma,
\vspace{-1mm}
\end{equation}
where $\vartheta \left( {D,{D_0}} \right)$ denotes the large-scale near-far gain and it is given by
\vspace{-1mm}
\begin{equation}\label{NearFarDiversity}
\vartheta \left( {D,{D_0}} \right)
 = \ln \left( {\frac{{\sum\limits_{n = 1}^N {\left( {\frac{1}{{{c_n}}}} \right)\frac{{{\beta _n}}}{{\left( {D + {D_0}} \right)}}} }}{{\mathop \Pi \limits_{n = 1}^N {{\left( {\frac{1}{{{c_n}}}} \right)}^{\frac{{{\beta _n}}}{{\left( {D + {D_0}} \right)}}}}}}} \right).
\vspace{-1mm}
\end{equation}
The Euler-Mascheroni constant, $\gamma \approx 0.57721$ \cite{abramowitz1964handbook}, denotes the small-scale near-far gain in Rayleigh fading channels.
The approximation in \eqref{EPGSISOERA2} is obtained by applying $\mathop {\lim }\limits_{x \to 0} {E_1}\left( x \right)\hspace{-1mm} \approx \hspace{-1mm} - \hspace{-1mm}\ln \left( x \right)\hspace{-1mm} - \hspace{-1mm}\gamma $ \cite{abramowitz1964handbook}.

\begin{figure}[t]
\vspace{-4mm}
\centering
\includegraphics[width=2.8in]{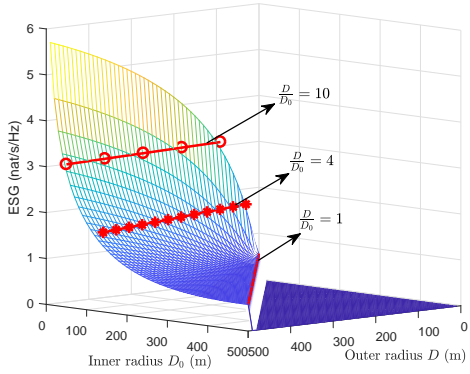}\vspace{-3mm}
\caption{The asymptotic ESG in \eqref{EPGSISOERA2} under ERA versus $D$ and $D_0$ with $K \rightarrow \infty$ and ${P_{\text{max}}} \rightarrow \infty$.}\vspace{-5mm}
\label{ESG_ERA}
\end{figure}

There are three important observations from \eqref{EPGSISOERA2}. Firstly, it can be observed that the asymptotic ESG in \eqref{EPGSISOERA2} is not a function of ${P_{\text{max}}}$ but a function of $D$ and $D_0$.
It implies that the asymptotic ESG saturates for a sufficiently large ${P_{\text{max}}}$.
In fact, the sum-rates of both SISO-NOMA systems in \eqref{InstantSumRateSISONOMA} and SISO-OMA systems in \eqref{InstantSumRateSISOOMA} increase very slowly with ${P_{\text{max}}} \rightarrow \infty$, which results in a saturated ESG in the high SNR regime.

Secondly, based on the weighted arithmetic and geometric means (AM-GM) inequality\cite{kedlaya1994proof}, we can show that $\vartheta \left( {D,{D_0}} \right) \ge 0$.
In particular, for the extreme case that all the users are randomly deployed on a circle, i.e., $D = D_0$, we have $c_1 = ,\ldots, = c_N$ according to \eqref{BetaCn} and $\vartheta \left( {D,{D_0}} \right) = 0$ and thus the performance gain achieved by NOMA is bounded by below with $\gamma = 0.57721$ nat/s/Hz.
Otherwise, the asymptotic ESG is larger than $\gamma$ for a general cell deployment $D > D_0$.
As a result, in the asymptotic case with $K \rightarrow \infty$ and ${P_{\text{max}}} \rightarrow \infty$, SISO-NOMA can provide \emph{at least $0.57721$} nat/s/Hz spectral efficiency gain over SISO-OMA for an arbitrary user deployment.
In fact, the minimum asymptotic ESG gain $\gamma$ arises from the small-scale Rayleigh fading since we force all the users have the same large-scale fading when $D = D_0$.

Thirdly, the dominating component of the asymptotic ESG \eqref{EPGSISOERA2} originates from function $\vartheta \left( {D,{D_0}} \right)$ given in \eqref{NearFarDiversity}, which actually characterizes the near-far gain due to large-scale fading.
To visualize the large-scale near-far gain, we illustrate the asymptotic ESG in \eqref{EPGSISOERA2} versus $D$ and $D_0$ in Fig. \ref{ESG_ERA}.
We can observe that, when $D = D_0$,  the minimum asymptotic ESG $\gamma \approx 0.57721$ nat/s/Hz due to the small-scale near-far gain can be obtained.
Besides, when $D > D_0$, the asymptotic ESG is much larger than $\gamma$ due to the large-scale near-far gain $\vartheta \left( {D,{D_0}} \right)$.
It can be observed that $\vartheta \left( {D,{D_0}} \right)$ in \eqref{NearFarDiversity} is actually a function of $\frac{D}{D_0}$.
It implies that, the larger $\frac{D}{D_0}$, the larger $\vartheta \left( {D,{D_0}} \right)$ and the larger ESG.
In fact, for a larger $\frac{D}{D_0}$, the heterogeneity in the large-scale fading among users becomes higher and the SISO-NOMA can exploit the near-far diversity more efficiently to improve the sum-rate performance.

\begin{Remark}
Note that, in the literature \cite{Dingtobepublished,WeiTCOM2017}, it has been shown that two users with a large distance difference or channel gain difference are preferred to be paired.
It is consistent with our conclusion in this paper, where a larger $\frac{D}{D_0}$ implies a higher ESG of NOMA over OMA.
However, it is worth to note that this is the first work which quantifies the ESG of NOMA over OMA and identifies two kinds of near-far gains in ESG.
\end{Remark}
\section{ESG of MIMO-NOMA over MIMO-OMA}
In this section, we first derive the ergodic sum-rate of MIMO-NOMA and MIMO-OMA.
Then, the ESG of MIMO-NOMA over MIMO-OMA is discussed asymptotically.

\subsection{Ergodic Sum-rate}
The instantaneous achievable data rate of user $k$ of the considered MIMO-NOMA system with the MMSE-SIC detection is given by\cite{Tse2005}:
\begin{align}\label{MIMONOMAIndividualAchievableRate}
R_{k}^{{\rm{MIMO-NOMA}}} = \ln \left| {{{\bf{I}}_M} + \frac{1}{{{N_0}}}\sum\limits_{i = k}^K {{p_i}{{\bf{h}}_i}{\bf{h}}_i^{\rm{H}}} } \right| \notag\\
- \ln \left| {{{\bf{I}}_M} + \frac{1}{{{N_0}}}\sum\limits_{i = k + 1}^K {{p_i}{{\bf{h}}_i}{\bf{h}}_i^{\rm{H}}} } \right|.
\end{align}

For the considered MIMO-OMA system, the instantaneous achievable data rate of user $k$ is given by \cite{Tse2005}
\vspace{-1mm}
\begin{equation}\label{MIMOOMAZFIndividualAchievableRate}
R_{k}^{{\rm{MIMO-OMA-ZF}}} = f_g {\ln}\left(1+ {\frac{{{p_{k }}{{\left| {{{\mathbf{w}_{g,k}^{\rm{H}}\mathbf{h}}_k}} \right|}^2}}}{{f_g N_0}}} \right),
\vspace{-1mm}
\end{equation}
where vector $\mathbf{w}_{g,k} \in \mathbb{C}^{ M \times 1}$ denotes the normalized ZF detection vector for user $k$ with ${\left\| {{{\mathbf{w}_{g,k}}}} \right\|}^2 = 1$, which is obtained based on the pseudoinverse of the channel matrix ${\bf{H}}_g$ in the $g$-th user group\cite{Tse2005}.

With the ERA strategy, the instantaneous sum-rate of MIMO-NOMA and MIMO-OMA are obtained by
\begin{align}
\hspace{-2mm}R_{\rm{sum}}^{{\rm{MIMO-NOMA}}} & = \sum\nolimits_{k = 1}^K R_{k}^{{\rm{MIMO-NOMA}}} \notag\\
&= {\ln}\left| {{{\bf{I}}_M} + \frac{{P_{\text{max}}}}{{{KN_0}}}\sum\nolimits_{k = 1}^K {{{\bf{h}}_k}{\bf{h}}_k^{\rm{H}}} } \right| \;\text{and}\label{InstantSumRateMIMONOMA} \\
\hspace{-2mm}R_{\rm{sum}}^{{\rm{MIMO-OMA}}}
& = \sum\nolimits_{k = 1}^K R_{k}^{{\rm{MIMO-OMA}}} \notag\\
&= \frac{M}{K}\sum\nolimits_{k = 1}^K \hspace{-1mm}{\ln}\left(1\hspace{-1mm}+ \hspace{-1mm} \frac{{P_{\text{max}}}}{MN_0}{\left| {{{\mathbf{w}_{g,k}^{\rm{H}}\mathbf{h}}_k}} \right|}^2 \right), \label{InstantSumRateMIMOOMAZF}
\end{align}
respectively.

In fact, MMSE-SIC is capacity achieving \cite{Tse2005} and \eqref{InstantSumRateMIMONOMA} is the channel capacity with the deterministic channel matrix ${\bf{H}}$\cite{GoldsmithMIMOCapacity2003}.
It is difficult to obtain a closed-form expression for the channel capacity above due to the determinant of summation of matrices in \eqref{InstantSumRateMIMONOMA}.
To provide more insights, in the following theorem, we consider an asymptotically tight upper bound for the achievable sum-rate in \eqref{InstantSumRateMIMONOMA} with $K \to \infty$.

\begin{Thm}\label{Theorem1}
For the considered MIMO-NOMA system in \eqref{MIMONOMASystemModel} with the MMSE-SIC detection, the achievable sum-rate in \eqref{InstantSumRateMIMONOMA} is upper bounded by above as
\vspace{-1mm}
\begin{equation}\label{InstantSumRateMIMONOMA_UpperBound}
\hspace{-2mm}R_{\rm{sum}}^{{\rm{MIMO-NOMA}}} \le M{\ln}\left( {1 + \frac{{P_{\text{max}}}}{{{KMN_0}}}\sum\nolimits_{k = 1}^K {{{\left\| {{{\bf{h}}_k}} \right\|}^2}} } \right).
\vspace{-1mm}
\end{equation}
Also, the upper bound is asymptotically tight when $K \to \infty$, i.e.,
\vspace{-1mm}
\begin{equation}\label{AsympInstantSumRateMIMONOMA}
\hspace{-2mm}\mathop {\lim }\limits_{K \rightarrow \infty} \hspace{-1mm} R_{\rm{sum}}^{{\rm{MIMO-NOMA}}}
\hspace{-0.5mm}=\hspace{-0.5mm} \mathop {\lim }\limits_{K \rightarrow \infty} M{\ln}\hspace{-0.5mm}\left( \hspace{-0.5mm}{1 \hspace{-0.5mm}+\hspace{-0.5mm} \frac{{P_{\text{max}}}}{{{KMN_0}}}\sum\limits_{k = 1}^K {{{\left\| {{{\bf{h}}_k}} \right\|}^2}} } \hspace{-0.5mm}\right)\hspace{-0.5mm}.
\vspace{-1mm}
\end{equation}
\end{Thm}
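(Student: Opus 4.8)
The plan is to establish the inequality \eqref{InstantSumRateMIMONOMA_UpperBound} first and then argue that it becomes an equality in the limit $K \to \infty$. For the upper bound, I would start from the exact sum-rate in \eqref{InstantSumRateMIMONOMA}, namely $R_{\rm{sum}}^{{\rm{MIMO-NOMA}}} = \ln\left| {{\bf{I}}_M} + \frac{P_{\text{max}}}{KN_0}\sum_{k=1}^K {\bf{h}}_k {\bf{h}}_k^{\rm{H}} \right|$. Let $\mathbf{A} = {\bf{I}}_M + \frac{P_{\text{max}}}{KN_0}\sum_{k=1}^K {\bf{h}}_k {\bf{h}}_k^{\rm{H}}$ and denote its eigenvalues by $\lambda_1,\ldots,\lambda_M$. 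The key algebraic fact I would invoke is the AM-GM inequality applied to the eigenvalues: $\ln|\mathbf{A}| = \sum_{m=1}^M \ln \lambda_m = M \cdot \frac{1}{M}\sum_{m=1}^M \ln \lambda_m \le M \ln\left( \frac{1}{M}\sum_{m=1}^M \lambda_m \right)$, where the inequality follows from the concavity of the logarithm (equivalently, the weighted AM-GM inequality already cited in the paper). Since $\sum_{m=1}^M \lambda_m = \Tr(\mathbf{A}) = M + \frac{P_{\text{max}}}{KN_0}\sum_{k=1}^K \Tr({\bf{h}}_k {\bf{h}}_k^{\rm{H}}) = M + \frac{P_{\text{max}}}{KN_0}\sum_{k=1}^K \|{\bf{h}}_k\|^2$, substituting $\frac{1}{M}\sum_m \lambda_m$ directly yields the right-hand side of \eqref{InstantSumRateMIMONOMA_UpperBound}.

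For the asymptotic tightness in \eqref{AsympInstantSumRateMIMONOMA}, I would show that the eigenvalues $\lambda_m$ become asymptotically equal as $K \to \infty$, since AM-GM holds with equality precisely when all $\lambda_m$ coincide. The relevant object is the sample covariance matrix $\frac{1}{K}\sum_{k=1}^K {\bf{h}}_k {\bf{h}}_k^{\rm{H}}$. Because the channel vectors ${\bf{h}}_k = \frac{{\bf{g}}_k}{\sqrt{1+d_k^\alpha}}$ are independent with i.i.d. isotropic Rayleigh components ${\bf{g}}_k \sim \mathcal{CN}(\mathbf{0}, {\bf{I}}_M)$, each summand has expectation $\mathrm{E}\{{\bf{h}}_k {\bf{h}}_k^{\rm{H}}\} = \overline{\frac{1}{1+d^\alpha}}\,{\bf{I}}_M$, a scaled identity. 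By the strong law of large numbers applied entrywise, $\frac{1}{K}\sum_{k=1}^K {\bf{h}}_k {\bf{h}}_k^{\rm{H}} \to c\,{\bf{I}}_M$ almost surely for the scalar constant $c = \overline{\frac{1}{1+d^\alpha}}$. Consequently $\mathbf{A} \to {\bf{I}}_M + \frac{P_{\text{max}}}{N_0} c\,{\bf{I}}_M$, whose eigenvalues are all identical, so the AM-GM gap closes and the bound is met with equality in the limit.

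The main obstacle is making the eigenvalue-convergence argument rigorous enough to conclude that the logarithm of the gap vanishes, rather than merely that the individual terms converge. Specifically, I must ensure that the off-diagonal entries of $\frac{1}{K}\sum_k {\bf{h}}_k {\bf{h}}_k^{\rm{H}}$ converge to zero (they have zero mean by the circular symmetry of the Gaussian phases) while the diagonal entries all converge to the common value $c$; together these force the normalized eigenvalue spread to collapse. A clean way to package this is to note that both $\frac{1}{M}\Tr(\mathbf{A})$ and each $\lambda_m$ converge to $1 + \frac{P_{\text{max}}}{N_0} c$, so that $\ln|\mathbf{A}| = \sum_m \ln \lambda_m$ and $M\ln\left(\frac{1}{M}\Tr(\mathbf{A})\right)$ converge to the same limit $M\ln\left(1 + \frac{P_{\text{max}}}{N_0}c\right)$ by continuity of $\ln$. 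I would emphasize that the fixed number of antennas $M$ (independent of $K$) is what makes this concentration effective, whereas the heterogeneous large-scale fading factors $\frac{1}{1+d_k^\alpha}$ only affect the scalar $c$ and not the isotropy of the limiting matrix.
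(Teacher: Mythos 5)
Your proof is correct, and it actually supplies what the paper does not: the paper's ``proof'' of Theorem~\ref{Theorem1} is only a pointer to the bounding techniques of \cite{WangMUG}, with all details deferred to a journal version, so your write-up is a self-contained argument rather than a variant of a printed one. Both steps are sound. For the bound, with $\mathbf{A} = {{\bf{I}}_M} + \frac{P_{\text{max}}}{KN_0}\sum_{k=1}^{K}{{\bf{h}}_k}{\bf{h}}_k^{\rm{H}}$, the step $\ln\abs{\mathbf{A}} = \sum_{m=1}^{M}\ln\lambda_m \le M\ln\bigl(\tfrac{1}{M}\Tr(\mathbf{A})\bigr)$ is exactly Jensen/AM--GM, and $\Tr\bigl({{\bf{h}}_k}{\bf{h}}_k^{\rm{H}}\bigr) = \norm{{{\bf{h}}_k}}^2$ yields \eqref{InstantSumRateMIMONOMA_UpperBound}. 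For tightness, your law-of-large-numbers argument is the right one and is rigorous as packaged: the users' channels are i.i.d.\ (the ordering convention $\norm{{{\bf{h}}_1}}\ge\cdots\ge\norm{{{\bf{h}}_K}}$ is immaterial since $\sum_k {{\bf{h}}_k}{\bf{h}}_k^{\rm{H}}$ is permutation invariant), the summands have finite moments because $1/(1+d_k^{\alpha})\le 1$ and Gaussian entries have all moments, and since $M$ is fixed, entrywise almost-sure convergence of $\frac{1}{K}\sum_k {{\bf{h}}_k}{\bf{h}}_k^{\rm{H}}$ to $c\,{{\bf{I}}_M}$ with $c = {\mathrm{E}}\{1/(1+d^{\alpha})\}$ forces eigenvalues, trace, and determinant to converge, so both sides of \eqref{AsympInstantSumRateMIMONOMA} tend to the common limit $M\ln\bigl(1+\frac{P_{\text{max}}}{N_0}c\bigr)$. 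This limit is consistent with the paper's asymptotic expression \eqref{ErgodicSumRateMIMONOMA}, since $\overline{\norm{{\bf{h}}}^2} = Mc$, and with the intuition of Remark~\ref{Remark2} that the received signals isotropically span the $M$-dimensional space as $K\to\infty$; the only cosmetic addition worth making explicit is the finite-moment condition that licenses the strong law.
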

\begin{proof}
The main idea of proof is based on the bounding techniques in \cite{WangMUG}.
Due to the page limitation, we omit the detailed proof here and leave it for the journal version.
\end{proof}

Given the distance from a user to the BS as $d$, its channel gain ${{\left\| {{\mathbf{h}}} \right\|}^2}$ follows the Gamma distribution\cite{yang2017noma}, whose PDF and CDF are given by
\begin{align}
{f_{{{\left\| {{\mathbf{h}}} \right\|}^2 | d}}}\left( x \right) &= \text{Gamma} \left(M, 1+ d^\alpha\right)\; \text{and}\notag\\
{F_{{{\left\| {{\mathbf{h}}} \right\|}^2 | d}}}\left( x \right) &= \frac{{\gamma_{L} \left( {M,\left( {1 + d^\alpha } \right)x} \right)}}{\Gamma \left( M \right)},
\end{align}
respectively, where $\text{Gamma} \left(M, \xi\right) = \frac{{{{ \xi }^M}{x^{M - 1}}{e^{ -  \xi x}}}}{\Gamma \left( M \right)}$ denotes the PDF of a random variable with a Gamma distribution, ${\Gamma \left( M \right)}$ denotes the Gamma function, and ${\gamma_{L} \left( {M,\left( {1 + d^\alpha } \right)x} \right)}$ denotes the lower incomplete Gamma function.

Then, the CDF of the channel gain ${{\left\| {{\mathbf{h}}} \right\|}^2}$ is given by
\vspace{-1mm}
\begin{equation}
{F_{{{\left\| {{\mathbf{h}}} \right\|}^2}}}\left( x \right) = \int_{D_0}^D \frac{{\gamma_{L} \left( {M,\left( {1 + d^\alpha } \right)x} \right)}}{{\Gamma \left(M\right)}} {f_{{d}}}\left( z \right)dz.
\vspace{-1mm}
\end{equation}
Again, with the Gaussian-Chebyshev quadrature approximation\cite{abramowitz1964handbook}, the CDF and PDF of ${{\left\| {{\mathbf{h}}} \right\|}^2}$ are given by
\begin{align}
\hspace{-2mm}{F_{{{\left\| {{\mathbf{h}}} \right\|}^2}}}\left( x \right) &\approx 1 - \frac{1}{D+D_0}\sum\nolimits_{n = 1}^N \frac{{{\beta _n}\gamma_{L} \left( {M,{c_n}x} \right)}}{{\Gamma \left(M\right)}} \; \text{and} \notag\\
\hspace{-2mm}{f_{{{\left\| {{\mathbf{h}}} \right\|}^2}}}\left( x \right) &\approx \frac{1}{D+D_0}\sum\nolimits_{n = 1}^N {{\beta _n} \text{Gamma} \left(M, c_n\right)}, x \ge 0,
\end{align}
respectively, where ${\beta _n}$ and ${c_n}$ are given by \eqref{BetaCn}.

According to \eqref{AsympInstantSumRateMIMONOMA}, the asymptotic ergodic sum-rate of MIMO-NOMA with $K\rightarrow \infty$ is given by
\begin{align} \label{ErgodicSumRateMIMONOMA}
\mathop {\lim }\limits_{K \to \infty } \overline{R_{\rm{sum}}^{{\rm{MIMO-NOMA}}}}
& = M{\ln}\left( 1 + \frac{{P_{\text{max}}}}{MN_0} \overline{{{\left\| {{\mathbf{h}}} \right\|}^2}} \right), \\
& \approx M{\ln}\left(\hspace{-1mm}{1+ \frac{{P_{\text{max}}}}{{{\left(D\hspace{-1mm}+\hspace{-1mm}D_0\right)N_0}}}
\sum\nolimits_{n = 1}^N \hspace{-1mm}{\frac{{{\beta _n}}}{{{c_n}}}} } \hspace{-1mm}\right),\notag
\end{align}
where $\overline{{{\left\| {{\mathbf{h}}} \right\|}^2}}$ denotes the average channel gain and it is given by
\vspace{-1mm}
\begin{equation}\label{Mean_ChannelPowerGainMIMO}
\overline{{{\left\| {{\mathbf{h}}} \right\|}^2}} \approx \int_0^\infty  {x{f_{{{\left\| {{\mathbf{h}}} \right\|}^2}}}\left( x \right)} dx
= \frac{M}{D+D_0}\sum\nolimits_{n = 1}^N {\frac{{{\beta _n}}}{{{c_n}}}}.
\vspace{-1mm}
\end{equation}

\begin{Remark} \label{Remark2}
Comparing \eqref{ErgodicSumRateSISONOMA} and \eqref{ErgodicSumRateMIMONOMA}, we can observe that the performance of a MIMO-NOMA system is asymptotically equivalent to that of a SISO-NOMA system with $M$-fold orthogonal frequency bands and an equivalent channel gain of ${{\left\| {{{\bf{h}}}} \right\|}^2}$, when there is a sufficiently large number of users.
On the other hand, for $K \to \infty$ but a finite $M$, the number of antennas at the BS is much smaller compared to the number of users.
In such a case, the performance of this system approaches the one with a single-antenna BS.
In addition, with a sufficiently large number of users, the received signals at the BS fully span the $M$-dimensional signal space\cite{WangMUG}.
Therefore, MIMO-NOMA with MMSE-SIC can fully utilize the system spatial DOF, $M$, and its performance can be approximated by that of a SISO-NOMA system with $M$-fold frequency bands.
\end{Remark}

For MIMO-OMA, since ${\left\| {{{\mathbf{w}_{g,k}}}} \right\|}^2 = 1$ and ${\bf{g}}_k \sim \mathcal{CN}\left(\mathbf{0},{{\bf{I}}_M}\right)$, we have ${\left| {{{\mathbf{w}_{g,k}^{\rm{H}}\mathbf{g}}_k}} \right|} \sim \mathcal{CN}\left({0},1\right)$ \cite{Tse2005}.
As a result, ${\left| {{{\mathbf{w}_{g,k}^{\rm{H}}\mathbf{h}}_k}} \right|}$ in \eqref{InstantSumRateMIMOOMAZF} has an identical distribution with ${{\left| {{h}} \right|}^2}$, and its CDF and PDF can be given by \eqref{SISOChannelDistributionCDF} and \eqref{SISOChannelDistributionPDF}, respectively.
Therefore, the ergodic sum-rate of the considered MIMO-OMA system can be obtained by
\begin{align}\label{ErgodicSumRateMIMOOMAZF}
\hspace{-2mm}\overline{R_{\rm{sum}}^{{\rm{MIMO-OMA}}}} &= \int_0^\infty  M{{{\ln }}\left( {1 + \frac{P_{\text{max}}}{MN_0}x} \right){f_{{{\left| {{h}} \right|}^2}}}\left( x \right)} dx \\
& = \frac{M}{\left(D\hspace{-1mm}+\hspace{-1mm}D_0\right)}\sum\nolimits_{n = 1}^N \hspace{-1mm}{{\beta _n}{e^{\frac{{{c_n M N_0}}}{{P_{\text{max}}}}}} {E_1}\left( {\frac{{{c_n M N_0}}}{{P_{\text{max}}}}} \right)}.\notag
\end{align}

\subsection{ESG in Multi-antenna Systems}
Based on \eqref{ErgodicSumRateMIMONOMA} and \eqref{ErgodicSumRateMIMOOMAZF}, the asymptotic ESG of MIMO-NOMA over MIMO-OMA with ${K \rightarrow \infty}$ can be obtained by:
\begin{align}\label{EPGMIMOERA}
\hspace{-3mm}\mathop {\lim }\limits_{K \rightarrow \infty}  \overline{G^{\rm{MIMO}}} &\hspace{-1mm}=
\mathop {\lim }\limits_{K \to \infty }  \overline{R_{\rm{sum}}^{{\rm{MIMO-NOMA}}}} - \overline{R_{\rm{sum}}^{{\rm{MIMO-OMA}}}} \notag\\
&\hspace{-1mm}\approx M{\ln}\left( {1 + \frac{{P_{\text{max}}}}{{{\left(D+D_0\right)N_0}}}\sum\nolimits_{n = 1}^N {\frac{{{\beta _n}}}{{{c_n}}}} } \right) \notag\\
\hspace{-3mm}&\hspace{-1mm}- \frac{M}{\left(D\hspace{-1mm}+\hspace{-1mm}D_0\right)}\hspace{-1mm}\sum\nolimits_{n = 1}^N \hspace{-1mm}{{\beta _n}{e^{\frac{{{c_n}MN_0}}{{P_{\text{max}}}}}} \hspace{-1mm}{E_1}\hspace{-1mm}\left( \hspace{-0.5mm} {\frac{{{c_n}MN_0}}{{P_{\text{max}}}}} \hspace{-0.5mm} \right)}.
\end{align}
To reveal more insights, similar to equation \eqref{EPGSISOERA2}, we consider the asymptotic ESG of MIMO-NOMA over MIMO-OMA in the high SNR regime as follows
\vspace{-1mm}
\begin{equation}\label{EPGMIMOERA2}
\hspace{-2mm}\mathop {\lim }\limits_{K \rightarrow \infty, {P_{\text{max}}} \rightarrow \infty}  \overline{G^{\rm{MIMO}}} \approx M\vartheta \left( {D,{D_0}} \right) + M\ln\left(M\right) + M\gamma,\hspace{-0.5mm}
\vspace{-1mm}
\end{equation}
where $\vartheta \left( {D,{D_0}} \right)$ is given in \eqref{NearFarDiversity}.

Comparing \eqref{EPGSISOERA2} and \eqref{EPGMIMOERA2}, we have
\vspace{-1mm}
\begin{equation}\label{EPGMIMOERA3}
\mathop {\lim }\limits_{K \rightarrow \infty, {P_{\text{max}}} \rightarrow \infty}  \overline{G^{\rm{MIMO}}} = M\mathop {\lim }\limits_{K \rightarrow \infty, {P_{\text{max}}} \rightarrow \infty} \overline{G^{{\rm{SISO}}}} + M\ln\left(M\right),
\vspace{-1mm}
\end{equation}
which implies that the asymptotic ESG of SISO-NOMA over SISO-OMA is amplified by $M$ times when $M$ antennas are deployed at the BS.
In fact, for $K \rightarrow \infty$, the received signals fully span in the $M$-dimensional signal space\cite{WangMUG}, which enables MIMO-NOMA and MIMO-OMA to fully exploit the system spatial DOF, $M$.
In addition, to suppress the inter-user interference in MIMO-OMA, there is a factor of $\frac{1}{M}$ power loss on average within each group due to the ZF projection\cite{Tse2005}.
Therefore, we have an additional power gain $\ln\left(M\right)$ of MIMO-NOMA over MIMO-OMA in the third term in \eqref{EPGMIMOERA3}.

\section{Simulations}

\begin{figure}[t]
\centering
\vspace{-4mm}
\subfigure[$M = 1$]
{\label{APGVsK:a} 
\includegraphics[width=0.23\textwidth]{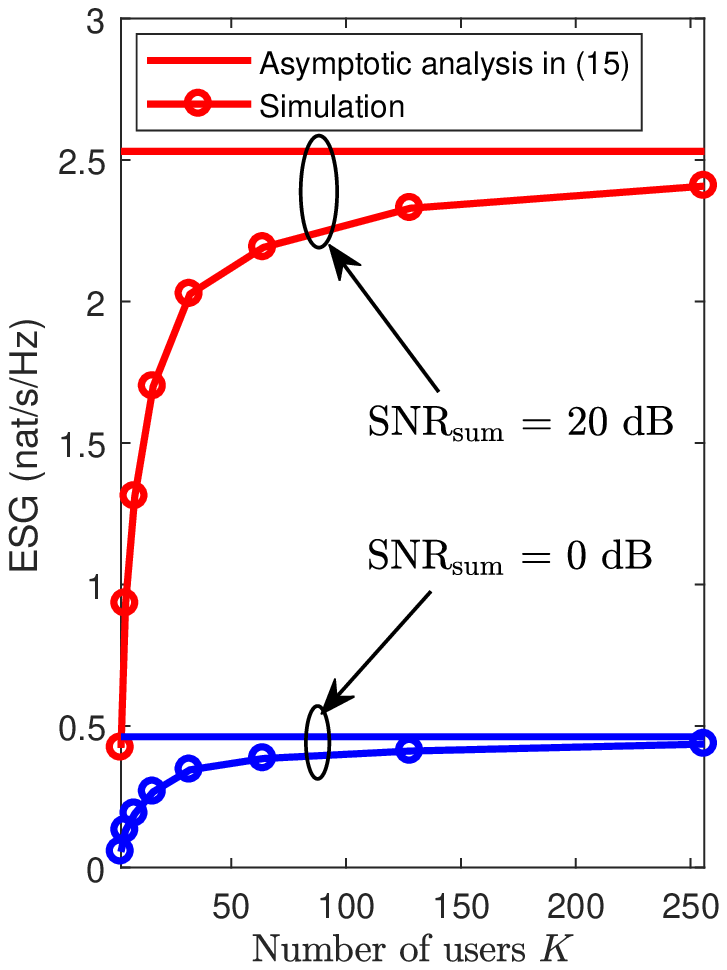}}
\subfigure[$M = 4$]
{\label{APGVsK:b} 
\includegraphics[width=0.23\textwidth]{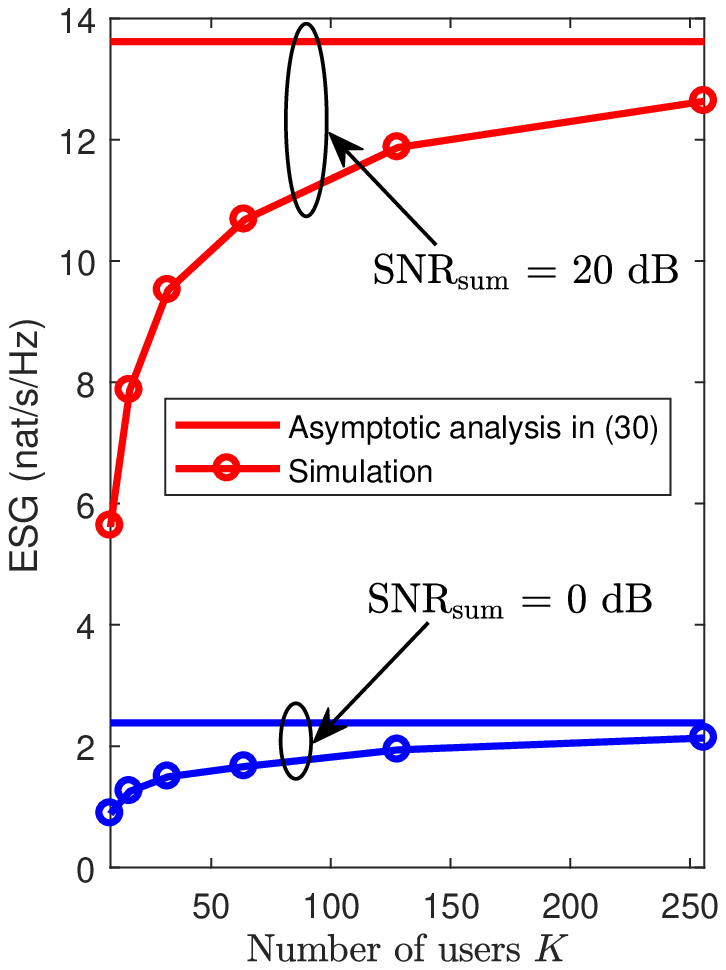}}\vspace{-2mm}
\caption{The ESG of NOMA over OMA with ERA versus the number of users $K$. The inner radius is $D_0 = 50$ m and the outer cell radius is $D = 500$ m. Two simulation cases with ${\rm{SNR}_{sum}} = 20$ dB and ${\rm{SNR}_{sum}} = 0$ dB are considered.}\vspace{-5mm}
\label{APGVsK}%
\end{figure}

We use simulations to verify our derived analytical results.
The inner cell radius is $D_0 = 50$ m and the outer cell radius is given by $D = [50, 200, 500]$ m.
The number of users ranges from $2$ to $256$ and the number of antennas equipped at the BS is $M = [1, 2, 4, 8]$\footnote{In this work, we focus on the multi-antenna system with a finite number of antennas at the BS. The performance gain of NOMA over OMA with a massive-antenna array at the BS will be considered in our future work.}.
The path loss exponent is $\alpha = 3.76$ according to the 3GPP path loss model\cite{Access2010}.
To characterize the system SNR in simulations, we define the received sum SNR at the BS as follows:
\vspace{-1mm}
\begin{equation}\label{SystemSNR}
{\rm{SNR}_{sum}} = \frac{{P_{\text{max}}}}{N_0} {\overline{{{\left| {{{h}}} \right|}^2}}} = \frac{{P_{\text{max}}}}{N_0} \frac{\overline{{{\left\| {{\mathbf{h}}} \right\|}^2}}}{M},
\vspace{-1mm}
\end{equation}
where ${\overline{{{\left| {{{h}}} \right|}^2}}}$ and ${\overline{{{\left\| {{\mathbf{h}}} \right\|}^2}}}$ are given by \eqref{Mean_ChannelPowerGainSISO} and \eqref{Mean_ChannelPowerGainMIMO}, respectively.
The total transmit power ${P_{\text{max}}}$ is adjusted adaptively for different cell sizes to satisfy ${\rm{SNR}_{sum}}$ in \eqref{SystemSNR} ranging from $0$ dB to $40$ dB.
All the simulation results in this paper are averaged over both small-scale fading and large-scale fading.

Fig. \ref{APGVsK} illustrates the ESG of NOMA and OMA versus the number of users.
We consider $M=1$ for SISO-NOMA and SISO-OMA in Fig. \ref{APGVsK:a} and $M=4$ for MIMO-NOMA and MIMO-OMA in Fig. \ref{APGVsK:b}.
In both single-antenna and multi-antenna scenarios, we observe a higher ESG of NOMA over OMA in the high SNR case, e.g. ${\rm{SNR}_{sum}} = 20$ dB.
In addition, for both cases with $M=1$ and $M=4$, the ESGs increase with the number of users $K$ monotonically and approach the derived asymptotic ESG expressions in \eqref{EPGSISOERA} and \eqref{EPGMIMOERA}.
It implies that the ESG saturates with increasing $K$ and thus the near-far gain can be captured by a finite number of users.
Comparing Fig. \ref{APGVsK:a} and Fig. \ref{APGVsK:b}, we can observe that a substantial improvement in ESG when applying NOMA in multi-antenna systems due to the extra spatial DOF as predicted in \eqref{EPGMIMOERA3}.

\begin{figure}[t]
\centering
\vspace{-4mm}
\subfigure[$M = 1$]
{\label{APGVsSNR:a} 
\includegraphics[width=0.23\textwidth]{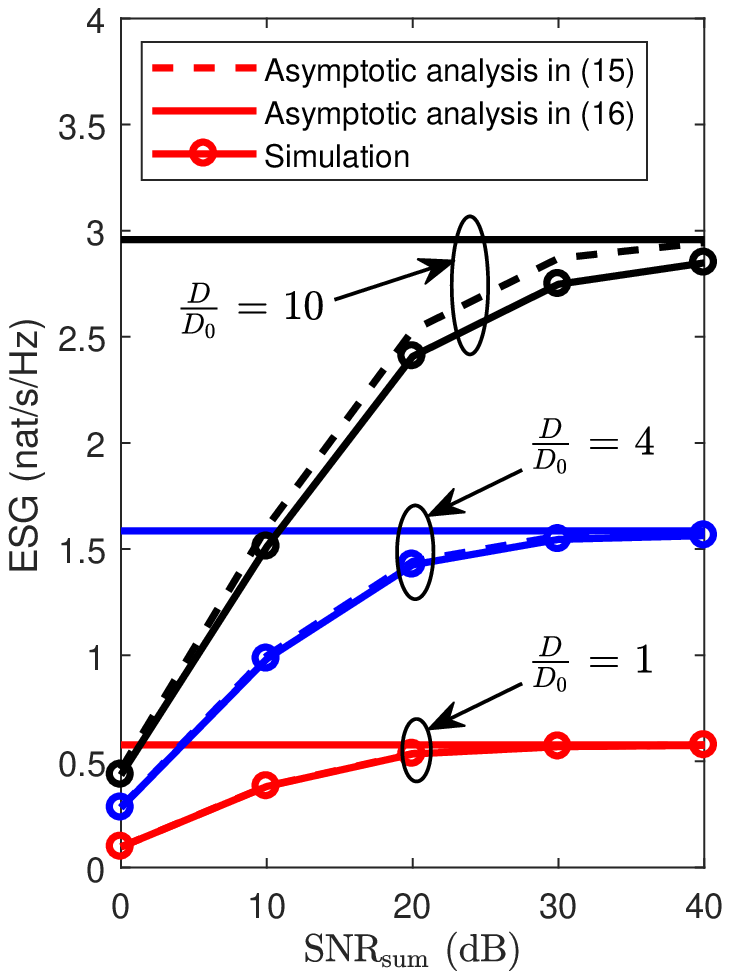}}\vspace{-2mm}
\subfigure[$M = 4$]
{\label{APGVsSNR:b} 
\includegraphics[width=0.23\textwidth]{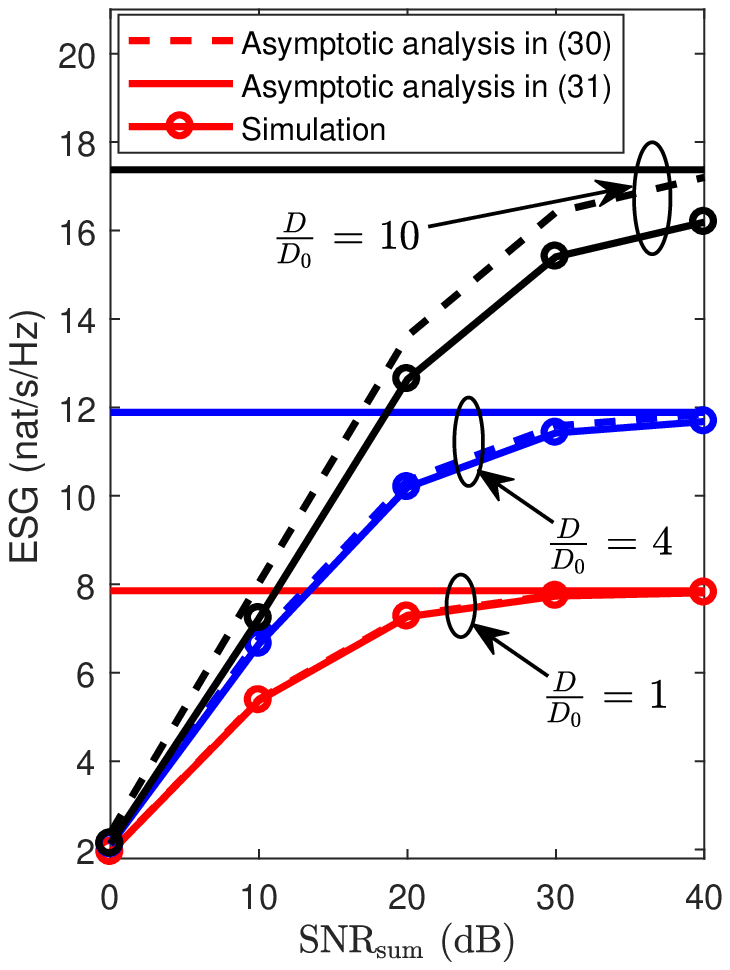}}
\caption{The ESG of NOMA over OMA with ERA versus ${\rm{SNR}_{sum}}$. The number of users is $K = 256$ and three simulation cases with $\frac{D}{D_0} = [1,4,10]$ are considered for comparison.}\vspace{-5mm}
\label{APGVsSNR}%
\end{figure}

Fig. \ref{APGVsSNR} depicts the ESG of NOMA over OMA versus ${\rm{SNR}_{sum}}$, in both single-antenna and multi-antenna scenarios.
We can observe that the asymptotic analyses of ESG in \eqref{EPGSISOERA} and \eqref{EPGMIMOERA} matches simulation results in all the considered cases, particularly for the case with a low $\frac{D}{D_0}$.
With increasing the SNR, the ESGs monotonically approach the asymptotic analyses in \eqref{EPGSISOERA2} and \eqref{EPGMIMOERA2} for the cases of $M=1$ and $M=4$, respectively.
In particular, in single-antenna scenario, when all the users are randomly distributed on a circle with $D = D_0 = 50$ m, we can observe an ESG about $0.575$ nat/s/Hz at ${\rm{SNR}_{sum}} = 40$ dB.
This verifies the accuracy of the derived small-scale near-far gain in \eqref{EPGSISOERA2}.
Besides, in both single-antenna and multi-antenna systems, we can observe that a larger $\frac{D}{D_0}$ results in a larger performance gain owing to the increased large-scale near-far gain.
In addition, it can be observed that the ESG increases faster for the case with a larger $\frac{D}{D_0}$.
In other words, a higher large-scale near-far gain enables NOMA to utilize the power more efficiently.

\begin{figure}[!t]
\centering
\includegraphics[width=3in]{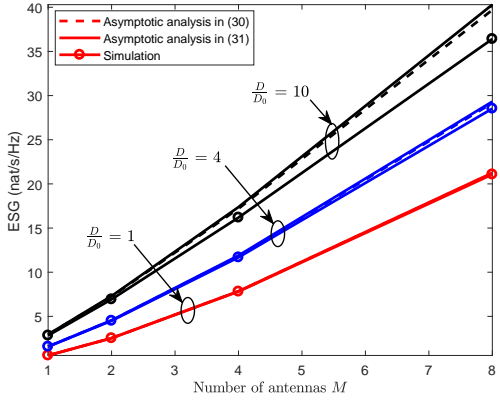}\vspace{-3mm}
\caption{The ESG of NOMA over OMA with ERA versus the number of antennas $M$. The simulation setup is the same as Fig. \ref{APGVsSNR} except ${\rm{SNR}_{sum}} = 40$ dB and $M = [1, 2, 4, 8]$.}\vspace{-4mm}
\label{MIMOAPGVsM}
\end{figure}

Fig. \ref{MIMOAPGVsM} illustrates the ESG of MIMO-NOMA over MIMO-OMA versus the number of antennas $M$.
It can be observed that the simulation results follow our asymptotic analyses derived in \eqref{EPGMIMOERA} and \eqref{EPGMIMOERA2} closely, especially for the case with a small $\frac{D}{D_0}$.
More importantly, as predicted in \eqref{EPGMIMOERA2}, a larger $\frac{D}{D_0}$ enables a larger increasing rate in the ESG with respective to the number of antennas $M$, due to the increased large-scale near-far gain $\vartheta \left( {D,{D_0}} \right)$.

\section{Conclusion and Discussion for Multi-cell Systems}
In this paper, we investigated the ESG brought by NOMA over OMA in both single-antenna and multi-antenna systems via asymptotic performance analyses for a sufficiently large number of users in the high SNR regime.
For single-antenna systems, the ESG of NOMA over OMA was quantified and two types of near-far gains were identified in the derived ESG, i.e., the large-scale near-far gain and the small-scale near-far gain.
The large-scale near-far gain increases with the cell size, while the small-scale near-far gain is a constant of $\gamma = 0.57721$ nat/s/Hz in Rayleigh fading channels.
Furthermore, we unveiled that the ESG of SISO-NOMA over SISO-OMA can be amplified by $M$ times when equipping $M$ antennas at the BS, owing to the extra spatial DOF offered by additional antennas.

In a multi-cell system, the system SNR in \eqref{SystemSNR} should be redefined as follows:
\vspace{-1mm}
\begin{equation}
{\rm{SNR}_{sum}^{multicell}} = \frac{{P_{\text{max}}}}{ \beta{P_{\text{max}}} + N_0} {\overline{{{\left| {{{h}}} \right|}^2}}},
\vspace{-1mm}
\end{equation}
where $\beta$ characterizes the inter-cell interference\cite{Xu2017}.
Due to this interference, NOMA might work at a low SNR regime and the performance gain of NOMA over OMA in multi-cell systems will be considered in our future work.

\section{Acknowledgement}
The authors would like to appreciate Prof. Ping Li from City University of Hong Kong for valuable discussion during this work.



\end{document}